\newcommand{\op}[1]{\ensuremath{\operatorname{#1}}}
\newcommand{\wt}[1]{\ensuremath{\widetilde{#1}}}
\newcommand{\wh}[1]{\ensuremath{\widehat{#1}}}
\newcommand{\ol}[1]{\ensuremath{\overline{#1}}}
\newcommand{\ul}[1]{\ensuremath{\underline{#1}}}
\newcommand{\cK}{\ensuremath{\mathcal{K}}}
\newcommand{\fz}{\ensuremath{\mathfrak{z}}}
\newcommand{\fk}{\ensuremath{\mathfrak{k}}}
\newcommand{\fg}{\ensuremath{\mathfrak{g}}}
\newcommand{\Q}{\ensuremath{\mathds{Q}}}
\newcommand{\R}{\ensuremath{\mathds{R}}}
\newcommand{\N}{\ensuremath{\mathds{N}}}
\newcommand{\Z}{\ensuremath{\mathds{Z}}}
\newcommand{\id}{\ensuremath{\operatorname{id}}}
\newcommand{\pr}{\ensuremath{\operatorname{pr}}}
\newcommand{\ev}{\ensuremath{\operatorname{ev}}}
\newcommand{\Hom}{\ensuremath{\operatorname{Hom}}}
\newcommand{\tx}[1]{\ensuremath{\text{#1}}}
\newcommand{\SU}{\ensuremath{\operatorname{SU}}}
\newcommand{\per}{\ensuremath{\operatorname{per}}}
\newcommand{\se}{\ensuremath{\nobreak\subseteq\nobreak}}
\newcommand{\from}{\ensuremath{\nobreak:\nobreak}}
\renewcommand{\to}{\ensuremath{\nobreak\rightarrow\nobreak}}
\newcommand{\sprod}{\ensuremath{\mathopen{\langle}\mathinner{\cdot},\mathinner{\cdot}\mathclose{\rangle}}}
\newcommand{\rrarrow}{\hspace{.05cm}\mbox{\put(0,-2){
$\rightarrow$}\put(0,2){ $\rightarrow$}\hspace{.5cm}}}
\renewcommand{\rrarrow}[1][]{\mbox{\put(0,-2){
$\rightarrow$}\put(0,2){ $\rightarrow$}\hspace{.5cm}$_{#1}$\hspace{.1cm}}}
\newtheorem{theorem}{Theorem}[section]
\newtheorem{proposition}[theorem]{Proposition}
\newtheorem{corollary}[theorem]{Corollary}
\newtheorem{lemma}[theorem]{Lemma}
\theoremstyle{definition}
\theoremstyle{remark}
\newtheorem{remark}[theorem]{Remark}
\numberwithin{equation}{section}
\begin{document}
\sloppy

\title{Non-integral central extensions of loop groups}

\author{Christoph Wockel}
\address{Department Mathematik, Universit{\"a}t Hamburg, Bundesstr. 55, 20146 Hamburg, Germany}
\email{christoph@wockel.eu}

\subjclass{Primary 22E67; Secondary 58H05, 22A22}
\date{January 1, 1994 and, in revised form, June 22, 1994.}

\keywords{Loop group, central extension, Lie groupoid, principal 2-bundle}

\begin{abstract}
It is well-known that the central extensions of the loop group of a
compact, simple and 1-connected Lie group are parametrised by their level
$k\in \Z$. This article concerns the question how much can be said for
arbitrary $k\in \R$ and we show that for each $k$ there exists a Lie
groupoid which has the level $k$ central extension as its quotient
if $k\in \Z$. By considering categorified principal bundles
we show, moreover, that the corresponding Lie groupoid has
the expected bundle structure.
\end{abstract}

\maketitle

\section*{Introduction}

In this paper we generalise a construction of the universal central extension $\wh{\Omega
K}$ of the loop group $\Omega K$ of a compact simple and 1-connected Lie group $K$, going
back to Mickelsson and Murray
\cite{Mickelsson87Kac-Moody-groups-topology-of-the-Dirac-determinant-bundle-and-fermionization},
\cite{Murray88Another-construction-of-the-central-extension-of-the-loop-group}. They
construct $\wh{\Omega K}$ as a quotient of a central extension $U(1)\times _{\kappa}P_{e}\Omega K$ of the based path group
$P_{e}\Omega K$. For this construction one has the freedom to
choose a real number $k\in \R$ (after having fixed all normalisations appropriately),
which is usually referred to as the \emph{level}. The construction from \cite{Mickelsson87Kac-Moody-groups-topology-of-the-Dirac-determinant-bundle-and-fermionization} and
\cite{Murray88Another-construction-of-the-central-extension-of-the-loop-group} then
yields a normal subgroup $N_{k}\unlhd U(1)\times _{\kappa}P_{e}\Omega K$ if
and only if $k\in \Z$ and constructs $\wh{\Omega K}$ as $(U(1)\times _{\kappa}P_{e}\Omega K)/N_{1}$.

The point of this article is that, although the construction of $N_{k}$ works if and \emph{only} if $k\in \Z$, for general $k$ there still exists an infinite-dimensional Lie group
$\cK$ acting on $U(1)\times _{\kappa}P_{e}\Omega K$ and the quotient of this
action coincides with $(U(1)\times _{\kappa}P_{e}\Omega K)/N_{k}$ if $k\in \Z$.
This then gives rise to an action Lie groupoid. By passing to a Morita equivalent
Lie groupoid we show that this Lie groupoid has the structure of a generalised principal bundle.

The results that we get here are closely related to the general extension theory of
infinite-dimensional Lie groups by categorical Lie groups from
\cite{Wockel08Categorified-central-extensions-etale-Lie-2-groups-and-Lies-Third-Theorem-for-locally-exponential-Lie-algebras}.
However, this article concerns more the global and differential point of view to those
extensions for the particular case of loop groups, while
\cite{Wockel08Categorified-central-extensions-etale-Lie-2-groups-and-Lies-Third-Theorem-for-locally-exponential-Lie-algebras}
provides a more detailed perspective from the side of cocycles.

\textbf{Notation:} Throughout this article, $G$ denotes a (possibly infinite-dimensional) connected Lie group with Lie algebra $\fg$, modelled on a locally convex space, $\fz$ is a sequentially complete locally convex space and $\Gamma\se \fz$ is a discrete subgroup of (the additive group of) $\fz$. Moreover, we set $Z:= \fz/\Gamma$.

Most of the time, $G$ will be the pointed loop group
\begin{equation*}
\Omega K:=\{\gamma\from \R\to K:\gamma(0)=e, \gamma(x+n)=\gamma(x)\,\forall n\in \N\}
\end{equation*}
of smooth and pointed loops in a compact, simple and 1-connected Lie group $K$, endowed with the usual Fr\'echet topology and point-wise multiplication.
The Lie algebra of $\Omega K$ is then \begin{equation*}
\Omega \fk:=\{\gamma\from \R\to \fk:\gamma(0)=0, \gamma(x+n)=\gamma(x)\,\forall n\in \N\}
\end{equation*}
with $\fk:= L(K)$. In this case, $\fz$ will be $\R$, $\Gamma$ will be $\Z$ and thus
\mbox{$Z=\R/\Z=:U(1)$}. We circumvent all normalisation issues by choosing this quite unnatural
realisation of the circle group. Moreover, we denote by $\exp$ the canonical quotient map
\mbox{$\exp:\R\to\R/\Z$}.

We will be a bit sloppy in our conventions concerning the precise model for $S^{1}$ and $B^{2}$. Instead, we collect the things that we want to assume:
\begin{itemize}
 \item $B^{2}$ and $S^{1}$ are manifolds with corners such that $S^{1}$
       may be identified with a submanifold of $B^{2}$ (which we denote
       by $\partial B^{2}$) and the base-point of $B^{2}$ is contained
       in $\partial B^{2}$.
 \item $C^{\infty}_{*}(S^{1},G)$ may be identified with the kernel of
       the evaluation map $\ev\from P_{e}G\to G$, where $P_{e}G$ denotes
       the space of smooth maps $f\from [0,1]\to G$ with $f(0)=e$.
 \item The map $C_{*}^{\infty}(B^{2},G)\to C_{*}^{\infty}(S^{1},G)_{e}$,
       $f\mapsto \left.f\right|_{\partial B^{2}}$ is surjective.
\end{itemize}
Here, the subscript $_{*}$ denotes pointed maps and the subscript $_{e}$ denoted the
connected component of the identity.

\section{Generalities on  central extensions of infinite-dimensional Lie groups}
\label{sect:generalitiesOnCentralExtensions}

We briefly review essentials on central extensions of infinite-dimensional Lie group, established by Neeb in \cite{Neeb02Central-extensions-of-infinite-dimensional-Lie-groups}.
There, the second locally smooth group cohomology $H^{2}_{\op{loc}}(G,Z)$ is defined to be the set of functions $ f\from G\times G \to Z$ such that
\begin{itemize}
 \item $f$ is smooth on $U\times U$ for $U\se G$ some open identity neighbourhood
 \item $ f(g,h)+f(gh,k)=f(g,hk)+f(h,k)$ for all $g,h,k\in G$
 \item $f(g,e)=f(e,g)=0$ for all $g\in G$,
\end{itemize}
(called \emph{locally smooth group cocycles} in this paper) modulo the equivalence relation
\begin{equation}\label{eqn:cocycleCoboundaryaCondition}
(f\sim f'):\Leftrightarrow f(g,h)-f'(g,h)=b(g)-b(gh)+b(h)
\end{equation}
for some $g\from G\to Z$ which is smooth on some identity neighbourhood and
satisfies $b(e)=0$. Similarly, we define $H^{2}_{\op{glob}}(G,Z)$ to be defined
in the same way except that we require $f$ (respectively $b$) to be smooth on
$G\times G$ (respectively $G$). We shall call such a $f$ a \emph{globally smooth
group cocycle}. Then in
\cite{Neeb02Central-extensions-of-infinite-dimensional-Lie-groups} it is shown
that $H^{2}_{\op{loc}}(G,Z)$ corresponds bijectively to the equivalence classes
of central extensions
\begin{equation}\label{eqn:centExt1}
 Z\to \wh{G}\to G
\end{equation}
of Lie groups such that \eqref{eqn:centExt1} is a \emph{locally} trivial principal bundle
and that $H^{2}_{\op{glob}}(G,Z)$ corresponds bijectively to equivalence classes of
central extensions of Lie groups such that \eqref{eqn:centExt1} is a \emph{globally}
trivial principal bundle.

The bulk of the work in
\cite{Neeb02Central-extensions-of-infinite-dimensional-Lie-groups} concerns the
integration issue for central extensions, i.e., how to derive a continuous Lie
algebra cocycle $D(f)\from \fg \times\fg\to \fz$ from a locally smooth group
cocycle and to determine whether for a given continuous Lie algebra cocycle
$\omega$ there exists a locally smooth group cocycle $f$ such that
$[Df]=[\omega]\in H^{2}_{c}(\fg,\fz)$ (where the subscript $_{c}$ means continuous Lie
algebra cohomology). In the latter case we say that $\omega$ \emph{integrates}. The main
result in \cite{Neeb02Central-extensions-of-infinite-dimensional-Lie-groups} is an exact
sequence
\begin{multline*}
 \Hom(\pi_{1}(G),Z)\to H^{2}(G,Z)\xrightarrow{D}H^{2}_{c}(\fg,\fz)\\\xrightarrow{P}\Hom(\pi_{2}(G),Z)\oplus \Hom(\pi_{1}(G),\op{Lin}_{c}(\fg,\fz)),
\end{multline*}
where $\op{Lin}_{c}$ denotes continuous linear maps and
\begin{equation*}
 \per_{\omega}:=\pr_{1}( P([\omega]))\from \pi_{2}(G)\to Z,\quad [\sigma]\mapsto\left[ \int_{\sigma}\omega^{l}\right]
\end{equation*}
for $\sigma$ a smooth representative of $[\sigma]\in \pi_{2}(G) $ and $\omega^{l}$ the left-invariant 2-form on $G$ with $\omega^{l}(e)=\omega$. In particular, when $G$ is simply connected,
then the sequence reduces to a shorter exact sequence
\begin{equation*}
 0\to H^{2}(G,Z)\xrightarrow{D}H^{2}_{c}(\fg,\fz)\xrightarrow{\per}\Hom(\pi_{2}(G),Z). 
\end{equation*}
Thus a given cocycle $\omega$ integrates in this case if and only if the corresponding
\emph{period homomorphism} $\per_{\omega}$ vanishes.

\section{The topological type of central extensions}
\label{sect:theTopologicalTypeOfCentralExtensions}

In \cite{Mickelsson85Two-cocycle-of-a-Kac-Moody-group}, Mickelsson derives
a \v{C}ech 1-cocycle for $\wh{\Omega \SU_{2}}$.
In this section we shall describe how to derive the topological type of the 
principal bundle
\begin{equation*}
 Z\to \wh{G}\to G
\end{equation*}
for a central extension coming from a locally smooth cocycle $f\from G\times
G\to Z$. This description is much more general than the one from
\cite{Mickelsson85Two-cocycle-of-a-Kac-Moody-group} and it will become apparent
from this construction that for a globally smooth cocycle the corresponding
bundle is automatically trivial. For this we will make use of the following
fact.

\begin{theorem}
 \label{thm:globalisation-of-smooth-structures-on-groups}
 Let $H$ be a group $W\se H$ be a subset containing $e$ and let $W$
 be endowed with a manifold structure. Moreover, assume that there
 exists an open neighbourhood $Q\se W$ of $e$ with $Q^{-1}=Q$ and $Q\cdot Q\se W$ such that
 \begin{itemize}
  \item $Q\times Q\ni (g,h)\mapsto gh \in W$ is smooth,
  \item $Q\ni g\mapsto g^{-1}\in Q$ is smooth and
  \item $Q$ generates $H$ as a group.
 \end{itemize}
 Then there exists a manifold structure on $H$ such that $Q$ is open in
 $H$ and such that group multiplication and inversion is smooth.
 Moreover, for each other choice of $Q$, satisfying the above
 conditions, the resulting smooth structures on $H$ coincide.
\end{theorem}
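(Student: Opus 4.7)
The plan is to transport $Q$ around $H$ by left multiplication, use these translates as charts, and reduce every global smoothness assertion to a local statement on $Q$.

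First I would introduce the atlas $\{\phi_{h}\from hQ\to Q\}_{h\in H}$ with $\phi_{h}(x):=h^{-1}x$, and topologise $H$ by declaring the sets $hV$ with $V\se Q$ open to form a basis. The chart domains cover $H$ since $e\in Q$, and the standard topological-group argument (using symmetry of $Q$ and continuity of the local multiplication $\mu$) yields Hausdorffness. On $\phi_{h}(hQ\cap h'Q)$ the transition $\phi_{h'}\circ\phi_{h}^{-1}$ reads $q\mapsto h'^{-1}hq$; to see this is smooth at a point $q_{0}$, set $q_{0}':=h'^{-1}hq_{0}\in Q$ and exploit the identity
\[
  h'^{-1}hq \;=\; q_{0}'\cdot(q_{0}^{-1}q),
\]
valid in $H$. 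For $q$ close to $q_{0}$ the factor $q_{0}^{-1}q$ lies near $e$ and hence in $Q$ (by smoothness of inversion on $Q$ combined with $\mu\from Q\times Q\to W$), so $q_{0}'(q_{0}^{-1}q)$ is defined by $\mu$, lies near $q_{0}'\in Q$, and hence in $Q$. Both steps are smooth, so the transition is.

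Next I would verify smoothness of the group operations. Around $(g_{0},h_{0})\in H\times H$, parametrise nearby points as $(g_{0}a,h_{0}b)$ with $a,b\in Q$ close to $e$; in the chart $\phi_{g_{0}h_{0}}$ multiplication then reads
\[
  (a,b)\;\longmapsto\;(h_{0}^{-1}ah_{0})\cdot b,
\]
so the only nontrivial ingredient is smoothness of $a\mapsto h_{0}^{-1}ah_{0}$ from a neighbourhood of $e$ in $Q$ into $Q$. Using that $Q$ generates $H$ and that $Q^{-1}=Q$, write $h_{0}=r_{1}\cdots r_{n}$ with all $r_{i}\in Q$; then $h_{0}^{-1}ah_{0}$ is an iterated conjugation by the $r_{i}^{-1}$, each of which is smooth on some neighbourhood of $e$ because the two multiplications $r_{i}^{-1}\cdot a$ and $(r_{i}^{-1}a)\cdot r_{i}$ fall under $\mu\from Q\times Q\to W$, and continuity (with value $e$ at $a=e$) lets one shrink the neighbourhood at every stage to keep every intermediate product inside $Q$. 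Inversion is handled analogously: in suitable charts it reads $q\mapsto h_{0}q^{-1}h_{0}^{-1}$.

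Finally, uniqueness. If $Q$ and $Q'$ both satisfy the hypotheses, their intersection is open in $W$ and inherits the same manifold structure, so both atlases on $H$ restrict to the same chart on $Q\cap Q'$; hence the identity of $H$ is smooth near $e$ in either direction, and left-translation invariance of the atlases extends this smoothness to all of $H$. The delicate step is the reduction of conjugation by an arbitrary $h_{0}\in H$ to conjugations by the finitely many generators coming from $Q$: this is where the generation hypothesis is indispensable, and it is the place where one must shrink neighbourhoods carefully so that the iterated conjugates stay in the domain of the local multiplication.
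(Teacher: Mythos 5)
Your argument is correct and is exactly the standard left-translation-atlas construction (localising the transition maps $q\mapsto h'^{-1}hq$ via $q_0'\cdot(q_0^{-1}q)$ and reducing conjugation to generators in $Q$ with successive shrinking) that the paper itself does not spell out but delegates to Bourbaki, Prop.~III.1.9.18, and to [Woc08, Thm.~II.1]. Since the cited references prove the theorem in essentially this way, your proposal matches the paper's (implicit) proof.
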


\begin{proof}
	The proof is well-known and straight-forward, cf.\ \cite[Thm.\ II.1]{Wockel08Categorified-central-extensions-etale-Lie-2-groups-and-Lies-Third-Theorem-for-locally-exponential-Lie-algebras}, \cite[Prop.\ III.1.9.18]{Bourbaki98Lie-groups-and-Lie-algebras.-Chapters-1--3}.
\end{proof}

We now derive a central extension from a locally smooth group cocycle
$f\from G\times G\to Z$.
First, we define a twisted group structure
on the set-theoretical direct product $Z\times G$ by $(a,g)\cdot (b,h):=(a+b+f(g,h),gh)$. Then the requirement on $f$ to define a group
cocycle implies that this defines a group multiplication with neutral element $(0,e)$ and $(a,g)^{-1}=(-a-f(g,g^{-1}),g^{-1})$. We denote this group by $Z\times _{f}G$. If $f$ is smooth on
$U\times U$ and $V\se U$ is an open identity neighbourhood with $V\cdot V\se W$ and
$V^{-1}=V$, then $Z\times U$ carries the product manifold
structure and $Z\times V $ is open in $Z\times U$. Since $G$ is assumed to be connected, $Z\times_{f} G$ is generated by $Z\times V$ and the preceding theorem yields a Lie group structure on $Z\times _{f}G$. Clearly, the sequence
\begin{equation*}
 Z\to Z\times_{f}G\to G
\end{equation*}
is a locally trivial principal bundle for we have the smooth section
\begin{equation*}
 U\ni x\mapsto (0,x)\in Z\times U\se Z\times_{f}G.
\end{equation*}

\begin{lemma}
 The assignment
 \begin{equation*}
 \tau(f)_{g,h}\from  gV \cap hV \to Z,\quad x\mapsto f(g,g^{-1}x)-f(h,h^{-1}x)
 \end{equation*}
 defines a \v{C}ech 1-cocycle $\tau(f)$ on the open cover $(gV)_{g\in G}$ and thus an element of $\check{Z}^{1}(G,Z)$. If $[f]=[f']$ in $H^{2}_{\op{loc}}(G,Z)$,
then $[\tau(f)]=[\tau(f')]$  in $\check{H}^{1}(G,Z)$.
\end{lemma}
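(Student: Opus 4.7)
The plan is to realise the putative cocycle $\tau(f)$ as the family of transition functions between natural smooth local sections of the principal $Z$-bundle $Z\to Z\times_{f}G\to G$ constructed just before the statement. For $g\in G$ I would define the section
\begin{equation*}
 \sigma_{g}\from gV\to Z\times_{f}G,\quad x\mapsto (f(g,g^{-1}x),x)=(0,g)\cdot(0,g^{-1}x),
\end{equation*}
the second description making clear that $\sigma_{g}$ is the left translate, by the group element $(0,g)$, of the obvious section $\sigma_{e}\from V\to Z\times_{f}G$, $y\mapsto (0,y)$. Since left translation in $Z\times_{f}G$ is a diffeomorphism, $\sigma_{g}$ is smooth; indeed, reading it in the chart $\phi_{g}\from Z\times V\to Z\times_{f}G$, $(z,v)\mapsto (0,g)\cdot (z,v)$ around $(0,g)$ gives the manifestly smooth map $x\mapsto (0,g^{-1}x)$.

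On the overlap $gV\cap hV$, the free $Z$-action on the total space (which translates the first coordinate) satisfies $\sigma_{g}(x)=\tau(f)_{g,h}(x)\cdot \sigma_{h}(x)$ straight from the definition, so $\tau(f)_{g,h}$ is the $Z$-valued difference of two smooth sections. Concretely, reading $\sigma_{h}$ in the chart $\phi_{g}$ produces $\phi_{g}^{-1}(\sigma_{h}(x))=(-\tau(f)_{g,h}(x),g^{-1}x)$, and smoothness of $\sigma_{h}$ forces $\tau(f)_{g,h}\from gV\cap hV\to Z$ to be smooth. Here the fact that $f$ is only assumed smooth on $U\times U$, rather than on $G\times V$, is the one real obstacle: it prevents verifying smoothness by direct manipulation of $f$ and forces the argument to route through the Lie group structure on $Z\times_{f}G$ provided by Theorem~\ref{thm:globalisation-of-smooth-structures-on-groups}. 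The \v{C}ech 1-cocycle identity $\tau(f)_{g,h}+\tau(f)_{h,k}=\tau(f)_{g,k}$ on triple overlaps is then immediate from the definition, since the two terms involving $h$ cancel.

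For the equivalence statement, I would assume $f-f'=\delta b$ as in \eqref{eqn:cocycleCoboundaryaCondition}, so $f(g,h)-f'(g,h)=b(g)-b(gh)+b(h)$ with $b$ smooth on some identity neighbourhood (which may be arranged to contain $V$ after shrinking) and $b(e)=0$. Direct substitution into the definition of $\tau$ yields
\begin{equation*}
 \tau(f)_{g,h}(x)-\tau(f')_{g,h}(x)=[b(g)+b(g^{-1}x)]-[b(h)+b(h^{-1}x)]=a_{g}(x)-a_{h}(x),
\end{equation*}
where $a_{g}\from gV\to Z$ is defined by $a_{g}(x):=b(g)+b(g^{-1}x)$. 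For $x\in gV$ one has $g^{-1}x\in V$, so $a_{g}$ is smooth on $gV$, and the displayed identity exhibits $\tau(f)-\tau(f')$ as the \v{C}ech coboundary of the $0$-cochain $(a_{g})_{g\in G}$, giving $[\tau(f)]=[\tau(f')]\in \check{H}^{1}(G,Z)$.
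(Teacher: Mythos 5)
Your proof is correct, but the way you establish smoothness of $\tau(f)_{g,h}$ is genuinely different from the paper's, and one of your side remarks is off. The paper argues directly from the group cocycle identity: substituting $(g,g^{-1}h,h^{-1}x)$ into $f(g,h)+f(gh,k)=f(g,hk)+f(h,k)$ gives
\begin{equation*}
 f(g,g^{-1}x)-f(h,h^{-1}x)=f(g,g^{-1}h)-f(g^{-1}h,h^{-1}x),
\end{equation*}
and since $gV\cap hV\neq\emptyset$ forces $g^{-1}h\in V\cdot V\subseteq U$ while $h^{-1}x\in V$, the right-hand side is smooth in $x$ because $f$ is smooth on $U\times U$. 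So your claim that the merely local smoothness of $f$ ``prevents verifying smoothness by direct manipulation of $f$'' is inaccurate --- the cocycle identity is precisely what makes the direct manipulation work. Your alternative, routing through the Lie group structure on $Z\times_{f}G$ and reading the section $\sigma_{h}$ in the chart $\phi_{g}$ around $(0,g)$, is nevertheless valid and not circular (the Lie group structure is established before the lemma and does not depend on it); it is essentially the content of the Proposition that follows the lemma in the paper, where the same sections $\sigma_{g}=\lambda_{(0,g)}\circ\sigma_{e}\circ\lambda_{g^{-1}}$ are used to identify $[\tau(f)]$ with the classifying class of the bundle. The paper's computation keeps the lemma independent of the bundle picture, so that it can then feed into that Proposition; your version explains conceptually why $\tau(f)$ is a transition cocycle at all, at the cost of front-loading the bundle argument. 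The coboundary half of your proof coincides with the paper's: both set $a_{g}(x)=b(g)+b(g^{-1}x)$ on $gV$ and check $\tau(f)-\tau(f')=\check{\delta}(a)$.
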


\begin{proof}
 We first note that $gh^{-1}\in V\cdot V\se W$ if $gV\cap hV\neq \emptyset$.
 From this it follows that 
 \begin{equation*}
  x\mapsto f(g,g^{-1}x)-f(h,h^{-1}x)=f(g^{-1}h,h^{-1}x)-f(g,g^{-1}h)
 \end{equation*}
is smooth on $gV\cap hV$, for $f$ is smooth on $U\times U$. From the
definition it is also clear that
$\tau(f)_{g,h}-\tau(f)_{g,k}+\tau(f)_{h,k}$ vanishes.

If $[f]=[f']$, then $f(g,h)-f'(g,h)=b(g)-b(gh)+b(h)$ for $b\from G\to Z$.
We assume without loss of generality that $f$ and $f'$ are smooth on $U\times U$ and $b$ is smooth on $U$ (presumably, the identity neighbourhoods may be distinct for $f$ and $f'$ and $b$). Then
\begin{equation*}
\tau(b)_{g}\from gV\to Z,\quad x\mapsto b(g)+b(g^{-1}x)
\end{equation*}
defines a {C}ech cochain with $\tau(f)-\tau(f')= \check{\delta}(\tau(b))$.
\end{proof}

We thus have a map $\tau\from H^{2}_{\op{loc}}(G,Z)\to \check{H}^{1}(G,Z)$, which clearly is a group homomorphism.

\begin{proposition}
	The principal bundle
	\begin{equation}\label{eqn:centExt2}
	 Z\to Z\times _{f}G\to G
	\end{equation}
	is classified by $[\tau(f)]\in\check{H}^{1}(G,Z)$.
\end{proposition}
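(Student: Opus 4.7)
The approach is to exhibit, for the open cover $(gV)_{g\in G}$ that was used to define $\tau(f)$, a system of local sections of $\pi\from Z\times_{f}G\to G$ whose associated transition cocycle coincides with $\tau(f)$. Since a principal $Z$-bundle is classified by the \v{C}ech class of any such transition cocycle, this will prove the claim.

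The first step is to construct the sections. Over $V$ the obvious smooth section is $x\mapsto (0,x)$, and over $gV$ I would translate it to the left by $(0,g)\in Z\times_{f}G$, setting
$$\sigma_{g}\from gV\to Z\times_{f}G,\quad x\mapsto (0,g)\cdot (0,g^{-1}x)=(f(g,g^{-1}x),x).$$
Even though $f(g,\cdot)$ is only assumed smooth near the identity, smoothness of $\sigma_{g}$ is automatic: the map factors as $gV\xrightarrow{L_{g^{-1}}}V\xrightarrow{(0,\cdot)}Z\times V\hookrightarrow Z\times_{f}G$ followed by left translation by $(0,g)$ inside the Lie group $Z\times_{f}G$, and each of these pieces is smooth. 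Clearly $\pi\circ\sigma_{g}=\id_{gV}$.

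The second step is to identify the transition functions. Since $f(x,e)=0$, the right $Z$-action giving $\pi$ its principal bundle structure is simply $(a,x)\cdot b=(a+b,x)$. Hence, on $gV\cap hV$, the unique function $t_{gh}\from gV\cap hV\to Z$ satisfying $\sigma_{g}(x)=\sigma_{h}(x)\cdot t_{gh}(x)$ is
$$t_{gh}(x)=f(g,g^{-1}x)-f(h,h^{-1}x)=\tau(f)_{g,h}(x).$$
Smoothness of $t_{gh}$ on $gV\cap hV$ and the \v{C}ech cocycle identity $t_{gh}+t_{hk}=t_{gk}$ were already verified in the preceding lemma. The standard classification of principal $Z$-bundles by \v{C}ech cocycles then yields that $\pi$ is classified by $[\tau(f)]\in\check{H}^{1}(G,Z)$.

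I do not expect a substantive obstacle; the content is a direct unravelling of the group law in $Z\times_{f}G$. The only slightly delicate point is the smoothness of $\sigma_{g}$, which is handled not by inspecting $f(g,\cdot)$ itself but by recognising $\sigma_{g}$ as a left translate inside $Z\times_{f}G$ of the identity chart $Z\times V$.
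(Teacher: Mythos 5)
Your proposal is correct and follows the paper's own argument essentially verbatim: the paper likewise defines $\sigma_{g}(x)=(f(g,g^{-1}x),x)$ as $\lambda_{(0,g)}\circ\sigma_{e}\circ\lambda_{g^{-1}}$ to obtain smoothness, and then reads off $\sigma_{g}(x)=\sigma_{h}(x)\cdot\tau(f)_{g,h}(x)$. The point you flag as delicate --- getting smoothness via left translation rather than from $f(g,\cdot)$ directly --- is exactly the device the paper uses.
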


\begin{proof}
 From the construction of the topology on $Z\times _{f}G$ it follows immediately
 that $\sigma_{e}(x):=(0,x)$ defines a smooth section on $V$. Thus the assignment
 \begin{equation*}
  \sigma_{g}\from gV\to Z\times_{f}G,\quad x\mapsto (f(g,g^{-1}x),x)=\left(\lambda_{(0,g)}\circ\sigma_{e}\circ\lambda_{{g}^{-1}}\right)(x)
 \end{equation*}
is smooth, where $\lambda_{g^{-1}}$ denotes left multiplication in $G$
with $g^{-1}$ and $\lambda_{(0,g)}$ denotes left multiplication in
$Z\times_{f} G$ with $(0,g)$. Consequently,
$(\sigma_{g}\from gV\to Z\times_{f}G)_{g\in G}$ defines a system of
sections for the principal bundle \eqref{eqn:centExt2} and since
$\tau(f)$ satisfies $\sigma_{g}(x)=\sigma_{h}(x)\cdot \tau(f)_{g,h}(x)$, this already shows the claim.
\end{proof}

\begin{corollary}\label{cor:globallySmoothCocycle}
A locally smooth cocycle $f\from G\times G\to Z$ is equivalent to a globally smooth cocycle if and only if the principal bundle, underlying $Z\to Z\times_{f}G\to G$, is topologically trivial.
\end{corollary}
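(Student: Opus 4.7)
The plan is to apply the preceding proposition, which classifies the principal bundle $Z\to Z\times_{f}G\to G$ by $[\tau(f)]\in \check{H}^{1}(G,Z)$, in both directions.

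For the forward direction, suppose $f\sim f'$ with $f'$ globally smooth. The preceding lemma gives $[\tau(f)]=[\tau(f')]$. If $f'$ is globally smooth, then the local product chart $Z\times V$ on $Z\times_{f'}G$ extends to a global product chart $Z\times G$, so that $g\mapsto (0,g)$ is a global smooth section of $Z\to Z\times_{f'}G\to G$. Thus this bundle is trivial, $[\tau(f')]=0$, and hence $[\tau(f)]=0$, making $Z\to Z\times_{f}G\to G$ topologically trivial as claimed.

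For the converse, assume $Z\to Z\times_{f}G\to G$ is trivial and fix a global smooth section $\sigma$. After left-translating by $\sigma(e)^{-1}\in Z\times_{f}G$ we may arrange $\sigma(e)=(0,e)$ and write $\sigma(g)=(b(g),g)$ with $b(e)=0$. Since $\sigma$, multiplication and inversion in $Z\times_{f}G$ are all smooth, the map
\begin{equation*}
 F\from G\times G\to Z\times\{e\}\cong Z,\quad (g,h)\mapsto \sigma(g)\,\sigma(h)\,\sigma(gh)^{-1}
\end{equation*}
is globally smooth. A direct computation with the twisted multiplication of $Z\times_{f}G$ gives
\begin{equation*}
 F(g,h)=f(g,h)+b(g)+b(h)-b(gh),
\end{equation*}
so setting $f':=F$ yields a globally smooth cocycle with
\begin{equation*}
 f(g,h)-f'(g,h)=-b(g)-b(h)+b(gh)=(-b)(g)-(-b)(gh)+(-b)(h),
\end{equation*}
i.e.\ $f-f'$ is the coboundary of $-b$ in the sense of \eqref{eqn:cocycleCoboundaryaCondition}.

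The main point still to verify is that $-b$ (equivalently $b$) is smooth on some identity neighbourhood, as required by the definition of the equivalence relation on $H^{2}_{\op{loc}}(G,Z)$. This follows because the canonical chart of $Z\times_{f}G$ around $(0,e)$ is the product chart on $Z\times V$ (where $V\se U$ is the identity neighbourhood on which $f$ is smooth and satisfies $V\cdot V\se U$, $V^{-1}=V$); smoothness of $\sigma$ at $e$ then directly forces $b|_{V}$ to be smooth, since $\sigma|_{V}=(b|_{V},\id_{V})$ in this chart. Combining this with the previous paragraph exhibits $f'\sim f$ with $f'$ globally smooth, completing the argument.
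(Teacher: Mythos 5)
Your proof is correct and follows essentially the same route as the paper: for the ``if'' direction you take a global smooth section, form the difference cocycle $f'(g,h)=\sigma(g)\sigma(h)\sigma(gh)^{-1}$ and check that $f-f'$ is the coboundary of a map that is smooth near $e$ via the chart $Z\times V$, while the ``only if'' direction is the paper's appeal to the construction of $\tau(f)$, which you merely spell out through the classification proposition. The extra details you supply (the normalisation $\sigma(e)=(0,e)$, the explicit computation $F(g,h)=f(g,h)+b(g)+b(h)-b(gh)$, and the verification that $b$ is smooth on $V$) are all accurate and fill in exactly what the paper leaves implicit.
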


\begin{proof}
If the bundle is topologically trivial, then there exists a smooth section
$\sigma\from G\to Z\times_{f}G$ and
\begin{equation*}
 f'(g,h):= \sigma(g)\sigma(h)\sigma(gh)^{-1}
\end{equation*}
defines a $Z$-valued cocycle. Since $Z$ acts freely on $Z\times _{f}G$, we have
$(0,g)=\sigma(g)\cdot b(g)$ for $b\from G\to Z$, smooth on a identity neighbourhood and satisfying \eqref{eqn:cocycleCoboundaryaCondition}.
The ``only if'' part is clear from the construction of $\tau(f)$.
\end{proof}

We thus obtain a sequence
\begin{equation*}
 0\to H^{2}_{\op{glob}}(G,Z)\to H^{2}_{\op{loc}}(G,Z)\xrightarrow{\tau} \check{H}^{1}(G,Z)
\end{equation*}
which is obviously exact. It would be interesting to determine for which groups and which
coefficients the map $\tau$ is \emph{not} surjective. Note that the case of $Z$ being
connected is the interesting one, since for a discrete group $A$, each principal
$A$-bundle over $G$ is a covering and thus admits a compatible Lie group structure.

\section{The universal central extension of Loop groups}

The results described in the preceding section applies to loop groups $\Omega K$
in the following way. If $\sprod\from \fk\times \fk\to \R$ denotes the Killing
form (which is non-degenerate and negative definite in our case), then
\begin{equation*}
 \omega\from \Omega \fk\times \Omega\fk \to \R,\quad (f,g)\mapsto 
 \int_{S^{1}} \langle f(t),g'(t)\rangle dt
\end{equation*}
defines a continuous Lie algebra cocycle. If we normalise
$\sprod$ in such way that the left-invariant extension $\omega^{l}$ of $\omega$
satisfies $\int_{\sigma}\omega^{l}=1$ for $\sigma$ a generator\footnote{The ambiguity in the sign that one still has for the normalisation of $\sprod$ will play no role in the sequel.} of $\pi_{2}(\Omega K)\cong \pi_{3}(K)\cong \Z$,
then the calculations in
\cite{MaierNeeb03Central-extensions-of-current-groups} show that for $Z=\R$ we have
$\per_{\omega}(\pi_{2}(\Omega K))=\Z$. Thus the cocycle $k\cdot \omega$ integrates to a
locally smooth group cocycle $f_{k}\from \Omega K\times \Omega K\to U(1)$, defining
a central extension
\begin{equation*}
 U(1)\to \wh{\Omega K}_{k}\to \Omega K
\end{equation*}
if and only if $k\in \Z$. Moreover, the central extension for $k=\pm 1$ is
universal, as it is shown in
\cite{MaierNeeb03Central-extensions-of-current-groups}. This means that for each other central extension $Z\to \wt{\Omega K}\to \Omega K$ there exist unique morphisms
$U(1)\to Z$ and $\wh{\Omega K}_{\pm 1}\to \wt{\Omega K}$ making the diagram
\begin{equation*}
\begin{CD}
 U(1) @>>> \wh{\Omega K}_{\pm 1}@>>> \Omega K \\
 @VVV @VVV @VVV\\
 Z @>>> \wh{\Omega K}@>>> \Omega K
\end{CD}
\end{equation*}
commute.

There also exist more ad-hoc constructions of $\wh{\Omega K}_{k}$, cf.\ \cite{PressleySegal86Loop-groups}, \cite{Mickelsson87Kac-Moody-groups-topology-of-the-Dirac-determinant-bundle-and-fermionization},
\cite{Murray88Another-construction-of-the-central-extension-of-the-loop-group}, \cite{MurrayStevenson01Yet-another-construction-of-the-central-extension-of-the-loop-group} or \cite{MurrayStevenson03Higgs-fields-bundle-gerbes-and-string-structures} which, more or less, all construct $\wh{\Omega K}_{k}$
first constructing a central extension $\wh{P_{e}\Omega K}\to P_{e}\Omega K$, corresponding to the pull-back of $\wh{\Omega K}_{k}$
along $\ev\from P_{e}\Omega K\to \Omega K$, and then considering an appropriate quotient of $\wh{P_{e}\Omega K}$. Since
$P_{e}\Omega K$ is contractible, the results of the preceding sections imply that the pull-back of the central extension of $L(\Omega K)\cong \Omega \fk$ along the evaluation homomorphism
$L(\ev)\from L(P_{e}\Omega K)\to L\fk$ to $L(P_{e}\Omega K)\cong P_{e} \Omega\fg$ integrates to a central extension
\begin{equation*}
 U(1)\to \wh{P_{e}\Omega K}\to P_{e}\Omega K,
\end{equation*}
given by a globally smooth cocycle $\kappa\from PG\times PG\to U(1)$.
The constructions of $\wh{\Omega K}$ cited above all deal with an explicit description of a normal subgroup $N_{k} \unlhd \wh{P_{e}\Omega K}$ in order to obtain an induced central extension
\begin{equation*}
  U(1)\to \underbrace{\wh{P_{e}\Omega K} / N_{k}}_{\cong\wh{\Omega K}_{k}} \to \underbrace{P_{e}\Omega K/\Omega (\Omega K)}_{\cong \Omega K}
\end{equation*}
(cf.\ also \cite[Section III]{GlocknerNeeb03Banach-Lie-quotients-enlargibility-and-universal-complexifications}). 

\section{Central extensions of loop groups from Lie groupoids}

We shall put more structure on the ad-hoc construction of $\wh{\Omega K}$ from the
previous section. In particular, we show that $\wh{P_{e}\Omega K} / N_{k}$ may be
obtained as the quotient of an action Lie groupoid, which also exists for
non-integral values of $k$.

\begin{remark}
 We briefly recall the essential notions for Lie groupoids. A Lie
 groupoid is a category object in the category of locally convex
 manifolds, such that source and target maps admit local
 inverses\footnote{We shall use concepts from the usual theory of Lie
 groupoids by replacing  the term ``surjective submersion'' at each
 occurrence by the term ``admits local inverses''. This is equivalent in
 the finite-dimensional case but may not be in the infinite-dimensional
 one.}. More precisely, it consists of two locally convex manifolds
 $M_{1}$ and $M_{0}$, together with smooth maps
 $\id\from M_{0}\to M_{1}$ and $s,t\from M_{1}\to M_{0}$, admitting
 local inverses, and a smooth map\footnote{Note that the existence of
 local inverses for $s$ and $t$ ensures the existence of a manifold
 structure on $M_{1\,\,s\!\!}\times_{t} M_{1}$.}
 $\circ \from M_{1\,\,s\!\!}\times_{t} M_{1}\to M_{0}$ satisfying the usual
 relations of identity, source, target and composition map of a small
 category. Moreover, we require that each morphism of this category is
 invertible an that the map $M_{1}\to M_{1}$, assigning to each morphism
 its inverse, is smooth. The quotient of such a Lie groupoid is defined
 to be the set of equivalence classes of isomorphic objects. The smooth
 structure on $M_{0}$ may or may not induce a smooth structure on the
 quotient, depending on how badly the quotient actually is behaved.

 A typical example of a Lie groupoid, called \emph{action groupoid}, is
 obtained from a smooth right action of a Lie group $G$ on a manifold
 $M$. With this data given, we set $M_{0}:=M$, $M_{1}:=M\times G$,
 $\id(m):=(m,e)$, $s(m,g):=m$, $t(m,g):=m.g$ and
 $(m.g,h)\circ(m,g):=(m,g\cdot h)$. Clearly, the inverse of $(m,g)$ is
 $(m.g,g^{-1})$. The quotient of the Lie groupoid clearly is given by
 $M/G$. If it admits a smooth structure such that the quotient map
 $M\to M/G$ is smooth an admits local inverses, then the action groupoid
 is Morita equivalent\footnote{Morita equivalent Lie groupoids are the
 correct replacement for the concept of equivalent categories. In fact,
 Morita equivalent Lie groupoids are equivalent as categories and
 possess the same amount of ``differential'' information, cf.\
 \cite{MoerdijkMrcun03Introduction-to-foliations-and-Lie-groupoids}.} to
 the Lie groupoid with $M_{0}=M_{1}=M/G$ and all structure maps the
 identity. If $M/G$ does not carry a smooth structure, then the action
 groupoid is an appropriate replacement for $M/G$.
\end{remark}

In order to motivate our procedure
we recall that $N_{k}$ is defined to be the subset
\begin{equation*}
\{(z,\gamma)\in U(1)\times P_{e}\Omega K:\gamma\in C_{*}^{\infty}(S^{1},\Omega K), z=\exp(-k\cdot \int_{D_{\gamma}}\omega^{l})\},
\end{equation*}
where $D_{\gamma}\from B^{2}\to \Omega K$ is a smooth map with
$\left.D_{\gamma}\right|_{\partial B^{2}}=\gamma$ and $\omega^{l}$ is the left-invariant
2-from on $\Omega K$ with $\omega^{l}(e)=\omega$. Since $\omega^{l}$ is an
integral 2-from on $\Omega K$, the value of $\exp(k\cdot\int_{D_{\gamma}}\omega^{l})$ does not
depend on the choice of $D_{\gamma}$ if $k\in \Z$.
The groupoid that we will construct carries some more
information, namely not only the boundary value of $D_{\gamma}$ but also the homotopy type of
it relative to $\partial B^{2}$. This information is contained in the group
\begin{equation*}
 C^{\infty}_{*}(B^{2},\Omega K)/C_{*}^{\infty}(S^{2},\Omega K)_{e},
\end{equation*}
(where we identify $C_{*}^{\infty}(S^{2},\Omega K)$ with the normal subgroup in
$C^{\infty}_{*}(B^{2},\Omega K)$ of functions that vanish on $\partial B^{2}$) which we shall
now endow with a Lie group structure. The following proof shall make use of the fact that
smooth and continuous homotopies of functions with values in locally convex manifold agree,
we refer to \cite{Wockel06A-Generalisation-of-Steenrods-Approximation-Theorem} for details on
this.

\begin{lemma}
If $G$ is a connected locally convex Lie group with Lie algebra $\fg$, then the quotient group
\begin{equation*}
 C_{*}^{\infty}(B^{2},G)/ C_{*}^{\infty}(S^{2},G)_{e}
\end{equation*}
carries a Lie group structure, modelled on $C^{\infty}_{*}(S^{1},\fg)$.
\end{lemma}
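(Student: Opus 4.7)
My plan is to apply Theorem~\ref{thm:globalisation-of-smooth-structures-on-groups} to the quotient group. Write $K := C_{*}^{\infty}(B^{2},G)$, which carries its standard Lie group structure modelled on $C_{*}^{\infty}(B^{2},\fg)$, and identify $N := C_{*}^{\infty}(S^{2},G)$ with the kernel of the restriction homomorphism $r\from K\to C_{*}^{\infty}(S^{1},G)_{e}$, $f\mapsto f|_{\partial B^{2}}$, which is surjective by the third assumption of the introduction. The identity component $N_{e}$ is open in $N$ (since $N$ is locally modelled on a locally convex space), and smooth/continuous homotopy agreement from \cite{Wockel06A-Generalisation-of-Steenrods-Approximation-Theorem} identifies $N/N_{e}$ with $[S^{2},G]_{*}=\pi_{2}(G)$. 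Since $N$ is normal in $K$, so is $N_{e}$; in particular $K/N_{e}$ is a group.

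The key geometric input is a smooth local section $\sigma\from V\to K$ of $r$ on some identity neighbourhood $V\se C_{*}^{\infty}(S^{1},G)_{e}$. I would build this from a collar $c\from S^{1}\times[0,1)\hookrightarrow B^{2}$ of $\partial B^{2}$ together with an exponential chart $\phi\from U'\to U\se G$ at $e_{G}$ and a smooth cutoff $\chi\from[0,1)\to[0,1]$ with $\chi(0)=1$ of small support: for $f\in V:=\{f\in C_{*}^{\infty}(S^{1},G)_{e}:f(S^{1})\se U\}$, set $\sigma(f)(c(\theta,t)):=\phi(\chi(t)\cdot\phi^{-1}(f(\theta)))$ inside the collar and $\sigma(f)\equiv e$ outside. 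Writing $\pi\from K\to K/N_{e}$ for the projection, $\pi\circ \sigma$ is injective on $V$, for $\pi(\sigma(f))=\pi(\sigma(g))$ forces $\sigma(f)\sigma(g)^{-1}\in N_{e}\se \ker(r)$, which yields $f=g$ after applying $r$. Let $W:=(\pi\circ\sigma)(V)$ carry the manifold structure transported from $V$ through $(\pi\circ\sigma)^{-1}$; this models $W$ on $C_{*}^{\infty}(S^{1},\fg)$.

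To finish, I would pick a symmetric $V_{0}\se V$ with $V_{0}\cdot V_{0}\se V$ and set $Q:=(\pi\circ\sigma)(V_{0})$. The main obstacle is the calculation that, for $f,g\in V_{0}$, the element
\begin{equation*}
\sigma(f)\sigma(g)\sigma(fg)^{-1}\in N
\end{equation*}
can be made arbitrarily close to $e_{K}$ in $K$ by shrinking $V_{0}$ (since $\sigma$ is smooth with $\sigma(e)=e_{K}$), and must therefore lie in $N_{e}$ because $N_{e}$ is open in $N$ --- this is the step that uses both the discreteness of $N/N_{e}=\pi_{2}(G)$ and the continuity of $\sigma$ at the identity, and is the only place where the global topology of $G$ intervenes. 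Granted this, $[\sigma(f)][\sigma(g)]=[\sigma(fg)]$, so the group law on $Q$ reads in the chart as the multiplication of $C_{*}^{\infty}(S^{1},G)_{e}$ restricted to $V_{0}\times V_{0}$, hence is smooth; inversion is analogous. That $Q$ generates $K/N_{e}$ as a group follows by equipping $K/N_{e}$ with its quotient topology: the map $\pi$ is then open, $Q$ is an open identity neighbourhood, and $K/N_{e}$ is connected because $K$ is (contractibility of $B^{2}$ gives $[B^{2},G]_{*}=0$, hence $\pi_{0}(K)=0$). All hypotheses of Theorem~\ref{thm:globalisation-of-smooth-structures-on-groups} being satisfied, it provides the desired Lie group structure on $K/N_{e}$ modelled on $C_{*}^{\infty}(S^{1},\fg)$.
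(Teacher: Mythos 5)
Your proof is correct and ends up feeding the same data into Theorem~\ref{thm:globalisation-of-smooth-structures-on-groups} --- an identity neighbourhood of the quotient identified, via restriction to $\partial B^{2}$, with an identity neighbourhood of $C^{\infty}_{*}(S^{1},G)_{e}$ --- but the mechanism is genuinely different from the paper's. The paper takes $W:=q(C_{*}^{\infty}(B^{2},U))$ for a chart $\varphi\from U\to\varphi(U)$ with $\varphi(U)$ \emph{convex} and shows directly that $[f]\mapsto\varphi\circ(f|_{\partial B^{2}})$ is a bijection onto $C_{*}^{\infty}(S^{1},\varphi(U))$: surjectivity by extending a loop over the disc along the straight-line null-homotopy in the convex chart, injectivity because two discs in $U$ with the same boundary values are homotopic. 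Since all group operations are pointwise, the coordinate expression of multiplication and inversion on $W$ is then literally that of $C^{\infty}_{*}(S^{1},G)$, so no separate multiplicativity check is needed. You instead build an explicit collar section $\sigma$ of the restriction map and transport the structure through $\pi\circ\sigma$; the price is the extra verification that $\sigma(f)\sigma(g)\sigma(fg)^{-1}\in N_{e}$, which you settle by continuity of the defect map at the identity together with openness of $N_{e}$ in $N$ --- a legitimate substitute for the paper's rel-boundary homotopy argument (one could even avoid shrinking $V_{0}$ by noting that the defect map sends the connected set $V_{0}\times V_{0}$ into $N$ and hits $e$). What your route buys is independence from the convexity of the chart image; what the paper's route buys is brevity, since bijectivity of $\wt\varphi$ makes both the chart and the smoothness of the group law fall out at once. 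Two points you should tighten: the openness of $Q=(\pi\circ\sigma)(V_{0})$ in the quotient topology does not follow merely from openness of $\pi$, because $\sigma(V_{0})$ is not open in $K$; you need that $\sigma(V_{0})\cdot N_{e}=\{h:r(h)\in V_{0},\ \sigma(r(h))^{-1}h\in N_{e}\}$ is open, which again uses openness of $N_{e}$ in $N$. And the openness of $N_{e}$ in the \emph{subspace} topology of $\ker(r)\se K$ deserves a word (in the chart, $\ker(r)$ is an open subset of a closed linear subspace, hence locally connected), though the paper is equally brief about the identification of $\ker(r)$ with $C_{*}^{\infty}(S^{2},G)$.
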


\begin{proof}
 We shall make use of the Lie group structure on $C^{\infty}_{*}(M,G)$ (with respect to
 point-wise group operations), which exists for each compact manifold $M$, possibly with
 corners \cite{Wockel06Smooth-extensions-and-spaces-of-smooth-and-holomorphic-mappings}.
 If $U$ is open in $G$, then
 \begin{equation*}
  C^{\infty}(M,U):=\{f\in C^{\infty}_{*}(M,G):{f}(M)\se U\}
 \end{equation*}
 is open
 in $C^{\infty}(M,G)$ and, likewise, if $U'$ is open in $\fg$, then $C^{\infty}(M,U')$
 is open in $C^{\infty}(M,\fg)$.
 If $U\se G$ is an open identity neighbourhood and $\varphi\from U\to \varphi(U)\se \fg$
 is a chart with $\varphi(U)$ open and convex and satisfying $\varphi(e)=0$, then
 a chart for the manifold structure, underlying $C^{\infty}_{*}(M,G)$, is given by
 \begin{equation*}
 C^{\infty}(M,U) \ni {f}\mapsto
 \varphi \circ{f}\in C^{\infty}(M,\varphi(U)).
 \end{equation*}
 Clearly, this induces a map
 \begin{equation*}
  \wt{\varphi}\from q(C_{*}^{\infty}(B^{2},U)) \to C_{*}^{\infty}(S^{1},\varphi(U)),
 \quad [{f}]\mapsto \varphi \circ\left(\left.{f}\right|_{\partial B^{2}}\right),
 \end{equation*}
 where $q\from C_{*}^{\infty}(B^{2},G)\to C_{*}^{\infty}(B^{2},G)/C^{\infty}_{*}(S^{2},G)_{e}$
 denotes the canonical quotient map. This map is bijective since each map $f\in C^{\infty}_{*}(S^{1},\varphi(U))$ has a 
homotopy to the map which is constantly $0$, defining an extension of $f$
to a map $F\from B^{2}\to \varphi(U)$ with $\left.F\right|_{\partial B^{2}}=f$ and
$[\varphi^{-1}\circ F]$ is mapped to $f$ under $\wt{\varphi}$. Similarly, we deduce that
$\wt{\varphi}$ is injective, since each two maps in $C_{*}^{\infty}(B^{2},U)$, which restrict to the same value on $\partial B^{2}$, are homotopic.

We are now ready to verify that the conditions of Theorem \ref{thm:globalisation-of-smooth-structures-on-groups} are satisfied, which we want to apply to the subset $W:=q(C_{*}^{\infty}(B^{2},U))$. On this we have a smooth structure, induced by the bijection $\wt{\varphi}$. Moreover, if $V\se U$ is an open
identity neighbourhood of $G$ with $V^{2}\se U$ and $V^{-1}=V$, then $q(C_{*}^{\infty}(B^{2},V))$ is open in $q(C_{*}^{\infty}(B^{2},U))$. The structure maps on the mapping group under consideration are all given
by the point-wise group structure in $G$, and so it follows that the coordinate representation
of the structure maps on $q(C_{*}^{\infty}(B^{2},V))$
coincides with the coordinate representation of the structure maps of
$C^{\infty}_{*}(S^{1},G)$. Since the latter are smooth it follows that the structure maps
on $q(C_{*}^{\infty}(B^{2},V))$ are smooth. Finally, 
$q(C_{*}^{\infty}(B^{2},V))$ generates $C_{*}^{\infty}(B^{2},G)/C^{\infty}_{*}(S^{2},G)_{e}$,
 because $G$ is connected.
\end{proof}

Note that there is a natural homomorphism
\begin{equation*}
\cK:= C_{*}^{\infty}(B^{2},\Omega K)/ C_{*}^{\infty}(S^{2},\Omega K)_{e}\to 
 C^{\infty}_{*}(S^{1},\Omega K),\quad [f]\mapsto \left.f\right|_{\partial B^{2}},
\end{equation*}
which obviously is smooth and surjective, because $\pi_{1}(\Omega K)$ vanishes. The kernel
of this map is $C_{*}^{\infty}(S^{2},\Omega K)/ C_{*}^{\infty}(S^{2},\Omega K)_{e}\cong \pi_{2}(\Omega K)$ and we thus obtain a central extension
\begin{equation*}
 \pi_{2}(\Omega K)\to \cK\to C_{*}^{\infty}(S^{1},\Omega K).
\end{equation*}
That $\pi_{2}(\Omega K)$ is in fact central follows from the fact that it is a discrete
normal subgroup of the connected group $\cK$.
For general, not necessarily simply connected $G$, we only obtain a crossed module
\begin{equation*}
 C_{*}^{\infty}(B^{2},G)/ C_{*}^{\infty}(S^{2},G)_{e}\to C_{*}^{\infty}(S^{1},G).
\end{equation*}
Since the image of this morphism is precisely $C_{*}^{\infty}(S^{1},G)_{e}$, this in turn gives
rise to the four term exact sequence
\begin{equation*}
 \pi_{2}(G)\to C_{*}^{\infty}(B^{2},G)/ C_{*}^{\infty}(S^{2},G)_{e}\to C_{*}^{\infty}(S^{1},G)\to \pi_{1}(G).
\end{equation*}
This sequence has a characteristic class in in $H^{3}(\pi_{1}(G),\pi_{2}(G))$, which has first been constructed in
\cite{EilenbergMacLane46Determination-of-the-second-homology-and-cohomology-groups-of-a-space-by-means-of-homotopy-invariants}.

The second smooth map, naturally associated to $\cK$ is given by
$\cK\to \R$, $ [f]\mapsto \int _{f}\omega^{l}$,
where the integral only depends on the homotopy class of $f$ because $\omega^{l}$ is
closed.

For the following lemma we define a generalisation of the
the cocycle $\kappa$ by
\begin{multline*}
\kappa_{k}\from (P_{e}\Omega K)\times (P_{e}\Omega K)\to U(1),\quad\\
(\gamma,\eta)\mapsto
\exp\left(k\cdot \int _{0}^{1}\int_{0}^{1}\langle \gamma(s)^{-1}\gamma'(s),\eta'(t)\eta(t)^{-1}\rangle\, ds\, dt\right),
\end{multline*}
which is for $k=1$ the cocycle $\kappa$ from \cite{Murray88Another-construction-of-the-central-extension-of-the-loop-group} (cf.\ also \cite{BaezCransStevensonSchreiber07From-loop-groups-to-2-groups}).

\begin{proposition}
 For each $k\in \R$, the Lie group $\cK$ acts smoothly from the right
 on $U(1)\times P_{e}\Omega K$ by
 \begin{equation}\label{eqn:actionForActionGroupoid}
  (z,\gamma).[f]:=( z\cdot\exp(-k\cdot\int_{f}\omega^{l})\cdot \kappa_{k}(\gamma,\left.f\right|_{\partial B^{2}}),\gamma\cdot \left.f\right|_{\partial B^{2}}).
 \end{equation}
\end{proposition}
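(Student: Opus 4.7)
The strategy is to recognise the proposed formula as right multiplication in the twisted product Lie group $U(1)\times_{\kappa_{k}}P_{e}\Omega K$ (which is a Lie group for every $k\in\R$, since by continuity from the case $k=1$ and connectivity of $P_{e}\Omega K$ the $\R$-valued double integral in the exponent of $\kappa_{k}$ is itself a cocycle on the nose) by the element
\[
\lambda([f]):=\left(\exp\bigl(-k\cdot{\textstyle\int_{f}}\omega^{l}\bigr),\,\left.f\right|_{\partial B^{2}}\right).
\]
The entire proposition then reduces to showing that $\lambda\from\cK\to U(1)\times_{\kappa_{k}}P_{e}\Omega K$ is a well-defined, smooth group homomorphism, for then $(z,\gamma).[f]:=(z,\gamma)\cdot\lambda([f])$ is automatically a smooth right action.

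For well-definedness I would establish the stronger fact that $[f]\mapsto\int_{f}\omega^{l}$ descends to a well-defined $\R$-valued (not merely $\R/\Z$-valued) functional on $\cK$, which is needed because $k\in\R$ is arbitrary. If $f'=f\cdot g$ with $g\in C^{\infty}_{*}(S^{2},\Omega K)_{e}$, one chooses a smooth homotopy $H\from B^{2}\times[0,1]\to\Omega K$ from $g$ to the constant map $e$ through maps vanishing on $\partial B^{2}$; such $H$ exists because $g$ lies in the identity component and continuous homotopies of loop-valued maps can be smoothed (cf.\ the reference cited before the preceding lemma). Then $f\cdot H$ is a smooth homotopy from $f'$ to $f$ rel $\partial B^{2}$, and Stokes applied to the closed form $(f\cdot H)^{*}\omega^{l}$ on $B^{2}\times[0,1]$ yields $\int_{f}\omega^{l}-\int_{f'}\omega^{l}=0$, because the boundary contribution on $\partial B^{2}\times[0,1]$ vanishes (there $f\cdot H$ factors through the one-dimensional $\partial B^{2}$).

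Smoothness of $\lambda$ is then routine: $[f]\mapsto\left.f\right|_{\partial B^{2}}$ is smooth by construction of the chart on $\cK$ in the preceding lemma, $f\mapsto\int_{B^{2}}f^{*}\omega^{l}$ is smooth on $C^{\infty}_{*}(B^{2},\Omega K)$ and descends smoothly to $\cK$ by the previous step, and $\kappa_{k}$ is smooth by its explicit formula. The remaining point, and the principal obstacle, is the homomorphism property of $\lambda$. In the $P_{e}\Omega K$-component it is immediate, and in the $U(1)$-component, after inserting the cocycle identity for $\kappa_{k}$ evaluated on the triple $(\gamma,\left.f\right|_{\partial B^{2}},\left.f'\right|_{\partial B^{2}})$, it reduces to the transgression identity
\begin{equation*}
{\textstyle\int_{f}}\omega^{l}+{\textstyle\int_{f'}}\omega^{l}-{\textstyle\int_{ff'}}\omega^{l}=\int_{0}^{1}\!\int_{0}^{1}\langle\gamma(s)^{-1}\gamma'(s),\eta'(t)\eta(t)^{-1}\rangle\,ds\,dt
\end{equation*}
in $\R$, where $\gamma:=\left.f\right|_{\partial B^{2}}$ and $\eta:=\left.f'\right|_{\partial B^{2}}$; that equality must hold exactly in $\R$ rather than only modulo $\Z$ is forced by the freedom to take $k$ irrational. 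This is the classical transgression formula expressing $\kappa$ as the transgression of $\omega^{l}$, and I would prove it by expanding $(ff')^{*}\omega^{l}$ via the product rule $(ff')^{*}\theta^{l}=\op{Ad}(f'^{-1})f^{*}\theta^{l}+f'^{*}\theta^{l}$ for the Maurer--Cartan form, inserting into the definition of $\omega^{l}$ as the left-invariant extension of $\omega=\int_{S^{1}}\langle\cdot,(\cdot)'\rangle$, and applying Stokes on $B^{2}$ to collect the boundary term as the claimed double integral.
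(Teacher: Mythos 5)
Your argument is correct and hinges on exactly the same key identity as the paper's proof, namely the transgression formula
\begin{equation*}
\int_{f}\omega^{l}+\int_{f'}\omega^{l}-\int_{ff'}\omega^{l}
=\int_{0}^{1}\!\int_{0}^{1}\langle\gamma(s)^{-1}\gamma'(s),\eta'(t)\eta(t)^{-1}\rangle\,ds\,dt,
\qquad \gamma=\left.f\right|_{\partial B^{2}},\ \eta=\left.f'\right|_{\partial B^{2}},
\end{equation*}
which the paper invokes only in its exponentiated $\kappa_{1}$-form, cited from Murray's Section 6, and then combines with the cocycle condition for $\kappa_{1}$ in a direct verification of $(z,\gamma).[f\cdot g]=((z,\gamma).[f]).[g]$. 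Your packaging is genuinely different and arguably cleaner: recognising the action as right translation along a map $\lambda\from\cK\to U(1)\times_{\kappa_{k}}P_{e}\Omega K$ makes the action axiom automatic once $\lambda$ is a homomorphism, and cleanly isolates the two points that actually require proof (well-definedness of $[f]\mapsto\int_{f}\omega^{l}$ on the quotient $\cK$, and the homomorphism property). Two of your additions improve on the paper. First, you note that for irrational $k$ the transgression identity must hold in $\R$, not merely modulo $\Z$; the paper's citation is a statement in $U(1)$, so strictly speaking it only yields the identity mod $\Z$, and one needs either Murray's underlying real-valued computation or a continuity-and-connectedness argument (of the kind you use to see that the double integral is an $\R$-valued cocycle for every $k$) to upgrade it. Second, you spell out the Stokes/homotopy-invariance argument for the well-definedness of the integral on $\cK$, which the paper only asserts in passing before the proposition. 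The paper's proof buys brevity by outsourcing the analytic content to Murray; yours is self-contained, at the cost of leaving the final Maurer--Cartan/Stokes computation as a (standard and correct) sketch.
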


\begin{proof}
 It is clear that the action map is smooth on the product
 \begin{equation*}
  (U(1)\times P_{e}\Omega K)\times\cK,
 \end{equation*}
 because the restriction map $\cK\to C^{\infty }_{*}(S^{1},\Omega K)$ and
 the integration map $\cK\to U(1)$ are smooth.

 In order to show that \eqref{eqn:actionForActionGroupoid} actually
 defines a group action we have to verify that
 $(z,\gamma).[f\cdot g]=((z,\gamma).[f]).[g	]$, which is equivalent
 to
 \begin{multline*}
  \exp(-k\cdot\int_{f\cdot g}\omega^{l})\cdot \kappa_{k}(\gamma,\left.(f\cdot g)\right|_{\partial B^{2}})=\\
  \exp(-k\cdot\int_{f}\omega^{l})\cdot \kappa_{k}(\gamma,\left.f\right|_{\partial B^{2}})\cdot
  \exp(-k\cdot\int _{g}\omega^{l})\cdot \kappa_{k}(\gamma\cdot\left.f\right|_{\partial B^{2}},\left.g\right|_{\partial B^{2}}).
 \end{multline*}
 But this in turn follows immediately from the cocycle condition for
 $\kappa_{1}$, because
 \begin{equation*}
  \kappa_{1}(\left.f\right|_{\partial B^{2}},\left.g\right|_{\partial B^{2}})=\exp(\int _{f}\omega^{l})\cdot\exp(\int _{g}\omega^{l})\cdot\exp(-\int _{f\cdot g}\omega^{l})
 \end{equation*}
 follows from $\left.f\right|_{\partial B^{2}},\left.g\right|_{\partial B^{2}}\in C_{*}^{\infty}(S^{1},\Omega K)$ (cf.\ \cite[Sect.\
 6]{Murray88Another-construction-of-the-central-extension-of-the-loop-group}).
\end{proof}

For each $k\in \R$, the action \eqref{eqn:actionForActionGroupoid} now defines an action
Lie groupoid
\begin{equation}\label{eqn:actionGroupoid}
 \big(U(1)\times P_{e}\Omega K\times \cK  \rrarrow[k] U(1)\times P_{e}\Omega K\big),
\end{equation}
i.e., $s(z,\gamma,[f])=(z,\gamma)$, $t(z,\gamma,[f])=(z,\gamma).[f]$ and
\begin{equation*}
((z,\gamma).[f],[f']) \circ ((z,\gamma),[f])=(z,\gamma,[f\cdot f']).
\end{equation*}
From formula \eqref{eqn:actionForActionGroupoid} we see in particular, that the action of
$\cK$, and thus the Lie groupoid \eqref{eqn:actionGroupoid} is not proper for arbitrary
$k$.
In fact, the subgroup $\Z\cong\pi_{2}(\Omega K)\se \cK$
acts on $U(1)\times P_{e}\Omega K$ by
\begin{equation*}
 a.(z,\gamma)=(z\cdot \exp(-k\cdot a),\gamma),
\end{equation*}
since we assumed that $\omega$ is normalised so that $\int_{\sigma}\omega^{l}=1$ for a generator $[\sigma]$ of $\pi_{2}(\Omega K)\cong \pi_{3}(K)$. Of course, the interesting range for $k$ in the previous proposition is $k\in [0,1]$, for then the quotient of the action ``interpolates'' between the trivial and the universal extension:
\begin{align*}
 \tx{for $k=0$}&:\,\, U(1)\times P_{e}\Omega K/\cK \cong U(1)\times \Omega K\\
 \tx{for $k=1$}&:\,\, U(1)\times P_{e}\Omega K/\cK \cong \wh{\Omega K}_{1}\\
\end{align*}
Moreover, we see that for each $k\in \Q$ the quotient $U(1)\times P_{e}\Omega K/\cK$
exists as a manifold, we shall give a precise argument for this at the end of the next
section. However, the group structure on $U(1)\times P_{e}\Omega K$, given by the
cocycle $\kappa$, only induces a group structure on the quotient in the case $k\in \Z$.

\section{Lie groupoids as principal 2-bundles}

In this section we show that the Lie groupoids, derived in the previous section, possess the
structure of a principal 2-bundle. For this we give at first a very short and condensed 
introduction to principal 2-bundles. The details can be found in \cite{Wockel09Principal-2-bundles-and-their-gauge-2-groups}.

A strict Lie 2-group is a category object in the category of locally convex Lie groups i.e., it consists of two locally convex Lie groups $G_{0}$ and $G_{1}$, together with morphisms $s,t\from G_{1}\rrarrow G_{0}$, a
morphism $i\from G_{0}\to G_{1}$ and a morphism $c\from {G_{1}}_{\,s\!\!}\times_{t}G_{1}\to G_{1}$
(assuming that the pull-back ${G_{1}}_{\,s\!\!}\times_{t}G_{1}$ exists), such that
$(G_{0},G_{1},s,t,i,c)$ constitutes a small category. In short, we write $(G_{1}\rrarrow G_{0})$ for this (cf.\ \cite{BaezLauda04Higher-dimensional-algebra.-V.-2-groups} and \cite{Porst08Strict-2-Groups-are-Crossed-Modules}).
A smooth 2-space  is simply a Lie
groupoid and similar to the case of Lie groups and manifolds, one defines a (right) 
$(G_{1}\rrarrow G_{0})$-2-space to be a 2-space $(M_{1}\rrarrow M_{0})$, together with a smooth functor
\begin{equation*}
(\rho_{1}, \rho_{0})\from (M_{1}\rrarrow M_{0})\times (G_{1}\rrarrow G_{0})\to (M_{1}\rrarrow M_{0}),
\end{equation*}
such that $\rho_{1}$ defines a (right) $G_{1}$-action on $M_{1}$ and $\rho_{0}$ defines a (right) $G_{0}$-action on on $M_{0}$. Similarly, one defines a morphism of $(G_{1}\rrarrow G_{0})$-2-spaces $(M_{1}\rrarrow M_{0})$ and $(N_{1}\rrarrow N_{0})$ to be a smooth functor
$(\varphi_{1}\times \varphi_{0})\from (M_{1}\times M_{0})\to (N_{1}\times N_{0})$ such
that $\varphi_{1}$ (respectively $\varphi_{0}$) defines a morphism of $G_{1}$ (respectively $G_{0}$)-spaces. A 2-morphism $\alpha \from\varphi\Rightarrow \psi$ between two morphisms $\varphi,\psi\from (M_{1}\times M_{0})\to (N_{1}\times N_{0})$ of $(G_{1}\rrarrow G_{0})$-spaces consists of a smooth map $\alpha\from M_{0}\to N_{1}$ such that
$\alpha$ defines a natural transformation between the functors $\varphi$ and $\psi$ and, moreover, satisfies $\alpha(m.g)=\alpha(m).\id_{g}$ for each $m\in M_{0}$ and $g\in G_{0}$.

With this said one defines a principal $(G_{1}\rrarrow G_{0})$-2-bundle over the smooth
manifold $M$ (viewed as a smooth 2-space with only identity morphisms, we write $\ul{M}$ for
this 2-space) as follows. It is a smooth $(G_{1}\rrarrow G_{0})$-2-space $(P_{1}\rrarrow P_{0})$, together
with a smooth functor $\pi\from (P_{1}\rrarrow P_{0})\to (M\rrarrow M)$, commuting with the
action functor $\rho$, such that there exist
\begin{itemize}
	\item an open cover $({U_{i}})_{i\in I}$ of $M$
	\item morphisms
	\begin{align*}
	\Phi_{i}&\from \pi^{-1}(\ul{U_{i}})\to \ul{U_{i}}\times (G_{1}\rrarrow G_{0})\quad\tx{and}\\
	\ol{\Phi}_{i}&\from  \ul{U_{i}}\times (G_{1}\rrarrow G_{0})\to \pi^{-1}(\ul{U_{i}})
	\end{align*}	
	of $(G_{1}\rrarrow G_{0})$-2-spaces
	\item 2-morphisms
	\begin{align*}
	\tau_{i}&\from \Phi_{i}\circ \ol{\Phi}_{i}\Rightarrow 
	   \id_{\ul{U_{i}}\times (G_{1}\rrarrow G_{0})}\\
	\ol{\tau}_{i}&\from \ol{\Phi}_{i}\circ \Phi_{i}\Rightarrow \id_{\pi^{-1}(\ul{U_{i}})}
	\end{align*}
	between morphisms 	of $(G_{1}\rrarrow G_{0})$-2-spaces,
\end{itemize}
such that $\pi$, $\Phi_{i}$ and $\ol{\Phi}_{i}$ commute in the usual way with the projection
functor $\pr\from \ul{U}_{i}\times (G_{1}\rrarrow G_{0})\to \ul{U_{i}}$.

We are now aiming at showing that the action Lie groupoid
\begin{equation*}
 (U(1)\times P_{e}\Omega K\times \cK  \rrarrow[k] U(1)\times P_{e}\Omega K)
\end{equation*}
possesses the structure of a principal 2-bundle (we used the subscript $_{k}$ to denote the
value of $k$ in the action map \eqref{eqn:actionForActionGroupoid}). The structure 2-group of
this bundle shall be given by $(U(1)\times \pi_{2}(\Omega K)\rrarrow[k] U(1))$ with
$s(z,[\sigma])=z$, $t(z,[\sigma])=z\cdot \exp(-k\cdot\int_{\sigma} \omega ^{l})$ and $(z\cdot
\exp(-k\cdot\int_{\sigma}\omega^{l}),[\sigma'])\circ(z,[\sigma])=(z,[\sigma'\cdot \sigma])$.\\

Before showing the claim of this section, we have to pass from the action Lie groupoid
\eqref{eqn:actionGroupoid} to a Morita equivalent one, which we will denote by $(P_{1}\rrarrow[k] P_{0})$. For this we choose a system
$(\sigma_{i}\from U_{i}\to P_{e}\Omega K)_{i\in I}$ of smooth local sections of the principal bundle $\ev\from P_{e}\Omega K\to \Omega K$. For technical reasons, that will become apparent later, we choose this system so that there exists smooth maps $\sigma_{ij}\from U_{i}\cap U_{j}\to \cK$ such that $\sigma_{i}(x)=\sigma_{j}(x)\cdot \left.\sigma_{ij}(x)\right|_{\partial B^{2}}$. Then we set
\begin{equation*}
 P_{0}:=\coprod_{i\in I} \left(U(1)\times\{\sigma_{i}(x):x\in U_{i}\}\right),
\end{equation*}
which we endow with the smooth structure induced from $ U(1)\times P_{e} \Omega K$. The set of morphisms
we set to be
\begin{equation*}
P_{1}:=\{(z,\gamma,\eta,[f])\in U(1)\times P_{0}\times P_{0}\times \cK: \ev(\gamma)=\ev(\eta), \gamma=\eta\cdot \left.f\right|_{\partial B^{2}}\}.
\end{equation*}
For a fixed choice of $\gamma$ and $\eta$, the possible different choices of $[f]$ are parametrised  by $\pi_{2}(\Omega K)$, and so $P_{1}$ has a natural manifold structure,
modelled on $C_{*}^{\infty}(S^{1},\Omega K)$. Source and target maps are induced by the two
projections from $P_{1}$ to $P_{0}$ and composition is induced by multiplication in $\cK$.

We define a smooth functor from 
$(P_{1}\rrarrow[k] P_{0})$ to $(U(1)\times P_{e}\Omega K\times \cK  \rrarrow[k] U(1)\times P_{e}\Omega K)$ by inclusion on objects and on morphisms by $(z,\gamma,\eta,[f])\mapsto (z,\gamma,[f])$. One easily checks that this functor actually defines a Morita equivalence.
\\

There exists an obvious $(U(1)\times \pi_{2}(\Omega K)\rrarrow[k] U(1))$-2-space structure on $(P_{1}\rrarrow[k] P_{0})$, given by
\begin{alignat}{3}
(z,\gamma).w&=(z\cdot w,\gamma)  &&\quad\tx{ on objects and by} \label{eqn:action1}\\
(z,\gamma,\eta,[f]).(w,[\sigma])&=(z\cdot w\cdot \exp(-k\cdot\int_{\sigma}),\gamma,\eta,[f\cdot \sigma])&&\quad\tx{ on morphisms.}\label{eqn:action2}
\end{alignat}
Moreover, there exists a natural smooth functor $\pi\from (P_{1}\rrarrow[k] P_{0})\to \ul{\Omega K}$,
given on objects by $(z,\gamma)\mapsto \ev(\gamma)$ and on morphisms by $(z,\gamma,\eta,[f])\mapsto \ev({\gamma})$.
We are now ready to prove the main result on this section.

\begin{proposition}
 The $(U(1)\times \pi_{2}(\Omega K)\rrarrow[k] U(1))$-2-space structure
 on $(P_{1}\rrarrow [k] P_{0})$, given by \eqref{eqn:action1} and
 \eqref{eqn:action2}, along with the smooth functor $\pi$, defines a principal 2-bundle.
\end{proposition}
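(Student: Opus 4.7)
The plan is to exhibit the local trivialisations over the cover $(U_i)_{i\in I}$ of $\Omega K$ that was already fixed in the construction of $(P_1\rrarrow[k] P_0)$, using the smooth sections $\sigma_i\from U_i\to P_e \Omega K$ of $\ev$ together with the smooth transition data $\sigma_{ij}\from U_i\cap U_j \to \cK$ satisfying $\sigma_i(x)=\sigma_j(x)\cdot \sigma_{ij}(x)|_{\partial B^{2}}$. Over $U_i$ the section $\sigma_i$ provides a canonical ``slice'' in $\pi^{-1}(\ul{U_i})$, and the $\sigma_{ij}$ together with the cocycle $\kappa_k$ smoothly relate this slice to points of $\pi^{-1}(\ul{U_i})$ lying in the other components of $P_0$.

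First I would construct $\ol{\Phi}_i\from \ul{U_i}\times (U(1)\times \pi_2(\Omega K)\rrarrow[k] U(1))\to \pi^{-1}(\ul{U_i})$ by $(x,z)\mapsto (z,\sigma_i(x))$ on objects and $(x,z,[\sigma])\mapsto (z,\sigma_i(x),\sigma_i(x),[\sigma])$ on morphisms, viewing $[\sigma]\in \pi_2(\Omega K)\se \cK$ as the kernel of the restriction $\cK\to C^\infty_*(S^1,\Omega K)$. Functoriality follows from additivity of $\int\omega^{l}$ combined with $\kappa_k(\sigma_i(x),e)=1$; equivariance with respect to \eqref{eqn:action1}--\eqref{eqn:action2} and compatibility with $\pi$ and with the projection to $U_i$ hold on the nose; and smoothness is inherited from $\sigma_i$ together with smoothness of the 2-group structure maps.

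Next I would define $\Phi_i\from \pi^{-1}(\ul{U_i})\to \ul{U_i}\times (U(1)\times \pi_2(\Omega K)\rrarrow[k] U(1))$ by using the canonical morphism
\[ \beta_{ij}(x,z):=(z,\sigma_j(x),\sigma_i(x),[\sigma_{ij}(x)])\in P_1 \]
from $(z,\sigma_j(x))$ to $(z\cdot c_{ij}(x),\sigma_i(x))$, where $c_{ij}(x):=\exp(-k\cdot\int_{\sigma_{ij}(x)}\omega^{l})\cdot \kappa_k(\sigma_j(x),\sigma_{ij}(x)|_{\partial B^{2}})$. Setting $\Phi_i(z,\sigma_j(x))=(x,z\cdot c_{ij}(x))$ on objects and extending to morphisms by source-target compatibility produces a smooth equivariant functor. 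Since $\sigma_{ii}$ is trivial and $c_{ii}\equiv 1$, the composition $\Phi_i\circ \ol{\Phi}_i$ equals the identity on objects, so $\tau_i$ may be taken to be the identity natural transformation. The 2-morphism $\ol{\tau}_i\from \ol{\Phi}_i\circ \Phi_i\Rightarrow \id_{\pi^{-1}(\ul{U_i})}$ is then defined object-wise by the inverse of $\beta_{ij}(x,z)$, which naturally sends $\ol{\Phi}_i(\Phi_i(z,\sigma_j(x)))=(z\cdot c_{ij}(x),\sigma_i(x))$ back to $(z,\sigma_j(x))$; smoothness follows from the smoothness of the $\sigma_{ij}$.

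The main obstacle I anticipate is verifying the $(U(1)\times \pi_2(\Omega K)\rrarrow[k] U(1))$-equivariance of $\ol{\tau}_i$ and the naturality of $\Phi_i$ on morphisms of $P_1$: both ultimately reduce to the cocycle identity for $\kappa_k$ together with the additivity $\int_{\sigma\cdot f}\omega^{l}=\int_\sigma \omega^{l}+\int_f \omega^{l}$, but the bookkeeping has to be carried out carefully, because the action of $(U(1)\times \pi_2(\Omega K))$ twists the $U(1)$-factor by $\exp(-k\cdot\int_\sigma\omega^{l})$ while leaving the $P_e\Omega K$-coordinate fixed, and the transition $\beta_{ij}$ simultaneously introduces both a $\kappa_k$-factor and an $\exp(-k\cdot\int\omega^{l})$-factor which must be shown to cancel consistently on composable pairs.
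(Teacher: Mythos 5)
Your construction is essentially the paper's: the same cover $(U_{i})$, the sections $\sigma_{i}$ and the transition data $\sigma_{ij}$ furnish $\ol{\Phi}_{i}$ and $\Phi_{i}$ with $\Phi_{i}\circ \ol{\Phi}_{i}=\id$ (so that $\tau_{i}$ is the identity), and $\ol{\tau}_{i}$ is given object-wise by the morphism determined by $\sigma_{ij}(x)$ relating the two sections, with smoothness coming from the smooth choice of the $\sigma_{ij}$, exactly as in the paper. The only divergence is in bookkeeping: you make the $U(1)$-twist $c_{ij}(x)=\exp(-k\cdot\int_{\sigma_{ij}(x)}\omega^{l})\cdot\kappa_{k}(\sigma_{j}(x),\sigma_{ij}(x)|_{\partial B^{2}})$ explicit in $\Phi_{i}$ on objects, where the paper's formula $(z,\gamma)\mapsto(\ev(\gamma),z)$ suppresses it, and this makes your $\ol{\tau}_{i}$ land on a morphism with the correct source and target, so your version is if anything the more careful of the two.
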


\begin{proof}
 We observe that $(z,\gamma)$ is an
 object of $\pi^{-1}(\ul{U_{i}})$ if and only if $\ev(\gamma)\in U_{i}$
 and $\gamma=\sigma_{i}(\ev(\gamma))$. From
 $\ev(\gamma)=\ev(\gamma\cdot\left.f\right|_{\partial B^{2}})$ for each
 $[f]\in \cK$ it follows that a morphisms has source in $\pi^{-1}(\ul{U_{i}})$ if
 and only if it has target in $\pi^{-1}(\ul{U_{i}})$, so that
 $\pi^{-1}(\ul{U_{i}})$ is in fact a full subcategory.

 We now define local trivialisations $\Phi_{i}$ by
 \begin{alignat*}{3}
  (z,\gamma)&\mapsto (\ev(\gamma),z) &&\quad\tx{on objects and by}\\
  (z,\gamma,\eta,[f])&\mapsto(z,\sigma_{ij}(\gamma,\eta)\cdot[f]^{-1})&&\quad\tx{on morphisms.}
 \end{alignat*}
 This is smooth due to the requirements that we put on the choice of
 $(\sigma_{i}\from U_{i}\to P_{e}\Omega K)_{i\in I}$ and that it
 actually defines a functor follows from the fact that
 $\pi_{2}(\Omega K)$ is central in $\cK$. The ``inverse''
 trivialisations $\ol{\Phi}_{i}$ we define by
 \begin{alignat*}{3}
  (l,z)&\mapsto (z,\sigma_{i}(l))&&\quad\tx{on objects and by}\\
  (l,(z,[\sigma]))&\mapsto (z,\sigma_{i}(l),\sigma_{i}(l),e)&&\quad\tx{on morphisms.}
 \end{alignat*}
 These obviously define smooth functors commuting with the
 $(U(1)\times \pi_{2}(\Omega K)\rrarrow[k] U(1))$-action, and we
 have $\Phi_{i}\circ \ol{\Phi}_{i}=\id$. We then define
 $\ol{\tau}_{i}\from \ol{\Phi}_{i}\circ \Phi_{i}\Rightarrow \id$ by
 \begin{equation*}
  (z,\gamma)\mapsto (z,\sigma_{i}(x),\sigma_{j}(x),\sigma_{ij}(x)) \tx{ if }\gamma=\sigma_{j}(x)\tx{ for }x\in U_{ij}.
 \end{equation*}
 It is easily checked that $\ol{\tau}_{i}$ actually defines a natural
 transformation and satisfies $\ol{\tau}_{i}((z,\gamma).z')=\ol{\tau}_{i}((z,\gamma)).(z',e)$.
\end{proof}

Note that the functors $\Phi_{i}$ and the natural transformations $\ol{\tau}_{i}$ in the
previous proof were smooth for they only need to be defined if $(\gamma,\eta)$ can be written
as $(\sigma_{i}(x),\sigma_{j}(x))$ for $x=\ev(\gamma)=\ev(\eta)$. If one tried to define a
2-bundle structure on the whole action groupoid \eqref{eqn:actionGroupoid} in a similar way,
then one would need a smooth global section of $\cK\to C_{*}^{\infty}(S^{1},\Omega K)$, which
does not exist. Thus the passage to the Morita equivalent groupoid $(P_{1}\rrarrow[k] P_{0})$
was necessary to ensure the smoothness properties of the local trivialisations.

\begin{corollary}
 If $k\in \Q$, then the quotient $P_{k}$ of the groupoid
 $(P_{1}\rrarrow[k] P_{0})$ can be endowed with the structure of a
 smooth manifold. Moreover, the action \eqref{eqn:action1} induces on
 $P_{k}$ the structure of a smooth $(U(1)/k)$-principal bundle.
\end{corollary}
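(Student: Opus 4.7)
The plan is to exploit the principal 2-bundle structure from the preceding proposition and reduce the computation of $P_k$ to a local analysis. Over each open $U_i\se \Omega K$, the 2-space $\pi^{-1}(\ul{U_i})$ is equivalent, via the trivialisations $\Phi_i$ and $\ol{\Phi}_i$, to $\ul{U_i}\times (U(1)\times \pi_2(\Omega K)\rrarrow[k] U(1))$. Since the quotient of a Lie groupoid only depends on its Morita class, and the quotient of a product groupoid is the product of quotients, it suffices to understand the quotient of the structure 2-group alone. That quotient is the orbit space of the action of $\pi_2(\Omega K)\cong \Z$ on $U(1)$ by $n\cdot z = z\cdot \exp(-kn)$, whose image I denote $H_k\se U(1)$.

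For $k=p/q\in \Q$ in lowest terms the subgroup $H_k$ equals $(1/q)\Z/\Z\se \R/\Z = U(1)$, i.e.\ the cyclic subgroup of order $q$. In particular $H_k$ is closed and discrete, so $U(1)/H_k$ is a Lie group (abstractly isomorphic to $U(1)$), which is precisely the group $U(1)/k$ named in the statement. Thus each $\pi^{-1}(\ul{U_i})/{\sim}$ can be identified with $U_i\times U(1)/k$, inheriting a smooth manifold structure and a free smooth $U(1)/k$-action. To assemble these local pieces into a global smooth structure on $P_k$ I would read off the transition data from the natural transformations $\ol{\tau}_i$ constructed in the preceding proof: on overlaps, the comparison $\Phi_j\circ \ol{\Phi}_i$ modifies the $U(1)$-coordinate by a smooth $U(1)$-valued expression built from $\sigma_{ij}$, $\kappa_k$ and $\int\omega^l$. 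Projecting this to $U(1)/k$ yields a smooth \v{C}ech 1-cocycle $(g_{ij})$ on the cover $(U_i)_{i\in I}$ of $\Omega K$; the principal bundle obtained by gluing the trivial pieces $U_i\times U(1)/k$ via $(g_{ij})$ then carries the manifold structure of $P_k$, together with the $U(1)/k$-principal bundle structure induced by \eqref{eqn:action1}.

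The main obstacle, and the reason $k\in \Q$ is essential, is precisely the closedness of $H_k$ in $U(1)$. For irrational $k$ the image $\exp(-k\Z)$ is dense in $U(1)$, so $U(1)/H_k$ is not even Hausdorff; the $\pi_2(\Omega K)$-orbits on each $U_i\times U(1)$ are dense, and no smooth manifold structure on $P_k$ can exist. This is the honest reason why for generic real $k$ the action Lie groupoid \eqref{eqn:actionGroupoid} must serve as a substitute for the (non-existent) quotient manifold.
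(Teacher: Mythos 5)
Your argument is correct and follows essentially the same route as the paper: the paper's proof is a one-line appeal to the construction of the band of a principal 2-bundle from \cite{Wockel09Principal-2-bundles-and-their-gauge-2-groups}, and that construction is precisely the local quotient-and-gluing procedure you carry out explicitly (quotient the structure 2-group to get $U(1)/H_{k}=\R/(\Z+k\Z)$, which is a Lie group exactly when $k\in\Q$, then glue the local trivialisations along the transition data). Your identification of $U(1)/k$ with $\R/(\Z+k\Z)\cong U(1)$ and your observation that density of $\exp(-k\Z)$ obstructs a manifold quotient for irrational $k$ both match the remarks the paper makes immediately before and after the corollary.
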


\begin{proof}
 This is exactly the construction of the band of a principal 2-bundle from \cite{Wockel09Principal-2-bundles-and-their-gauge-2-groups}.
\end{proof}

The previous result can also be obtained as in Section
\ref{sect:theTopologicalTypeOfCentralExtensions} by considering the Lie group
$U(1)/k=\R/(\Z+k\Z)$. This shows actually that $P_{k}$ can also be endowed with a Lie group
structure, turning \begin{equation*} (U(1)/k)\to P_{k}\to \Omega K \end{equation*} into a
central extension of Lie groups. However, the group structure on $P_{k}$ is not induced by the
one on $U(1)\times_{\kappa_{1}}P_{e}\Omega K$ any more.

\bibliographystyle{amsalpha}
\def\polhk#1{\setbox0=\hbox{#1}{\ooalign{\hidewidth
  \lower1.5ex\hbox{`}\hidewidth\crcr\unhbox0}}} \def\cprime{$'$}
\providecommand{\bysame}{\leavevmode\hbox to3em{\hrulefill}\thinspace}
\providecommand{\MR}{\relax\ifhmode\unskip\space\fi MR }
\providecommand{\MRhref}[2]{%
  \href{http://www.ams.org/mathscinet-getitem?mr=#1}{#2}
}
\providecommand{\href}[2]{#2}

\end{document}